\title{Two-Stage Asymmetric Tullock Contests with Cost Shifters and Endogenous Continuation Decision}
\author{Felix Reichel\thanks{Department of Economics, Johannes Kepler University Linz, Email: \url{mailto:felix.reichel@jku.at}}}
\date{September 30, 2025}
\theoremstyle{plain}
\newtheorem{theorem}{Theorem}
\newtheorem{proposition}{Proposition}
\newtheorem{corollary}{Corollary}
\newtheorem{assumption}{Assumption}
\theoremstyle{definition}
\newtheorem{definition}{Definition}
\theoremstyle{remark}
\newtheorem{remark}{Remark}
\newcommand{\RR}{\mathbb{R}}
\renewcommand*{\eqref}[1]{%
  \begingroup
    \hypersetup{linkcolor=green}% make the link (the number) green
    \textup{(\ref{#1})}% parentheses + number
  \endgroup
}
\begin{document}

\maketitle
\begin{abstract}
This paper introduces a contest-theoretic simplified model of triathlon as a sequential two-stage game. In Stage~1 (post-swim), participants decide whether to continue or withdraw from the contest, thereby generating an endogenous participation decision. In Stage~2 (bike--run), competition is represented as a Tullock contest in which swim drafting acts as a multiplicative shifter of quadratic effort costs. Closed-form equilibrium strategies are derived in the two-player case, and existence, uniqueness, and comparative statics are shown in the asymmetric $n$-player case. The continuation decision yields athlete-specific cutoff rules in swim drafting intensity and induces subgame-perfect equilibria (SPEs) with endogenous participation sets. The analysis relates swim drafting benefits, exposure, and group size to heterogeneous effective cost parameters and equilibrium efforts.
\end{abstract}
\vspace{1em}

\noindent\textbf{JEL codes:} C72, D74, D72, L83

\noindent\textbf{Keywords:} Contest theory; Tullock contest; triathlon; swim drafting; sequential games; asymmetric costs; subgame-perfect equilibrium

\newpage

\section{Contest Setup: Players, Timing, and Objects}\label{sec:setup}
There are $n\ge 2$ athletes indexed by $i\in N=\{1,\dots,n\}$. 
The contest is simplified to two stages.

\paragraph{Stage~1 (Post--Swim).} For each athlete $i$ we observe some state vector
\[
s_i=(t_i^S,r_i^S,D_i,G),
\]
where $t_i^S$ is swim time, $r_i^S$ is swim rank, $D_i\in[0,1]$ is the share of swim drafting, 
and $G$ is the full directed graph of stored swim drafting relations.  
The Stage~1 action set is $a_i\in\{C,T\}$: \emph{Continue Race (C)} or \emph{Withdraw from Race (T)}.

\paragraph{Stage~2 (Bike--Run Contest).}
Let $S\subseteq N$ be the subset of athletes who continue, with group size $m \leq |S|\ge 1$.  
Each $i\in S$ chooses a nonnegative effort $e_i$.  
The probability that $i$ wins is given by
\begin{equation}\label{eq:CSF}
p_i(e) = \frac{w_i e_i}{\sum_{j\in S} w_j e_j},
\end{equation}
where $w_i>0$ is a contest weight.  
Effort cost is quadratic with scaling parameter $k_i$, defined by
\begin{equation}\label{eq:cost}
C_i(e_i) = \tfrac12 k_i e_i^2,\qquad 
k_i := \frac{c_i}{\psi_i},
\end{equation}
where $c_i>0$ is a baseline cost parameter and $\psi_i$ is a swim drafting multiplier.  
The multiplier depends on the observed state,
\begin{equation}\label{eq:Psi}
\psi_i = \Psi(D_i,m,r_i^S,G)\in[\underline\psi,\overline\psi]\subset(0,\infty),
\end{equation}
with $\Psi$ increasing in swim drafting share $D_i$ and Lipschitz in $(D_i,m)$ where the reduced--drag model is $\psi_i=1/(1-\eta D_i)$ with $\eta\in(0,1)$.  
Furthermore let $\Delta_i>0$ denote the winner--loser prize differential relevant for $i$.

\paragraph{Payoffs.}
If athlete $i$ terminates in Stage~1, the outside payoff is
\begin{equation}\label{eq:outside}
U_i^T = \underbrace{-\alpha\, t_i^S}_{\text{time penalty}} 
        \;+\;\underbrace{-\beta\, r_i^S}_{\text{rank penalty}}
        \;+\;\underbrace{\vartheta_i}_{\text{additional training unit}},
\end{equation}
where $\alpha,\beta>0$ are penalty weights and $\vartheta_i\in\RR$ is an idiosyncratic term.  

If athlete $i$ continues to Stage~2, the expected utility is
\begin{equation}\label{eq:Ue}
U_i(e) = \underbrace{p_i(e)\,\Delta_i}_{\text{expected prize}} 
          \;+\;\underbrace{\big(-\tfrac12 k_i e_i^2\big)}_{\text{effort cost}} 
          \;+\;\underbrace{\text{(Terms independent of $e$)}}_{\text{irrelevant for effort choice}}.
\end{equation}

\begin{definition}[Strategy and Equilibrium]\label{def:equilibrium}
A (pure) strategy for athlete $i$ is $(a_i,\ e_i(\cdot))$, where $a_i\in\{C,T\}$ is the Stage~1 action and $e_i:\RR_+^m\to\RR_+$ is the Stage~2 effort choice.  
A \emph{subgame perfect equilibrium} (SPE) is a profile $(a_i^\ast,e_i^\ast)_{i\in N}$ such that:  
(i) given $S=\{i: a_i^\ast=C\}$, the effort profile $e^\ast$ is a Nash equilibrium (NE) of the Stage~2 contest;  
(ii) each $a_i^\ast$ maximizes total payoff given continuation values from (i).
\end{definition}

\section{Stage 2 Analysis: Existence, Uniqueness, and Closed Forms}\label{sec:stage2}
Stage 2 consists of the post--swim contest among the set of continuers $S$.

\subsection{Best responses}
Recall that athlete $i$'s probability of victory is
\[
p_i(e) = \frac{w_i e_i}{\sum_{j\in S} w_j e_j},
\]
where $e_i\ge 0$ is $i$'s bike--run effort and $w_i>0$ is a contest weight.  
The derivative $w.r.t.$ own effort is
\begin{equation}\label{eq:CSFder_r1}
\frac{\partial p_i}{\partial e_i} = \frac{p_i(1-p_i)}{e_i}, 
\qquad
\frac{\partial p_i}{\partial e_j} = -\frac{p_i p_j}{e_j},\ j\ne i.
\end{equation}

Dropping constants independent of $e_i$, athlete $i$ optimizes
\[
\max_{e_i\ge 0}\; p_i(e)\,\Delta_i - \tfrac12 k_i e_i^2,
\]
where $\Delta_i>0$ is $i$'s prize differential and $k_i=c_i/\psi_i$ is the effective cost parameter 
(baseline cost $c_i>0$, swim drafting multiplier $\psi_i>0$).  
The FOC for any interior optimum $e_i>0$ is
\begin{equation}\label{eq:FOC_r1}
0 = \frac{\partial U_i}{\partial e_i} 
   = \Delta_i\frac{p_i(1-p_i)}{e_i} - k_i e_i
   \quad\Longleftrightarrow\quad
   e_i^2 = \frac{\Delta_i}{k_i}\,p_i(1-p_i).
\end{equation}
Thus best responses map contest probabilities $p_i$ into equilibrium efforts via \eqref{eq:FOC_r1}.

\subsection{Two--player closed form ($m=2$)}
Suppose $S=\{i,j\}$.  
Define the \emph{effective advantage ratio}
\begin{equation}\label{eq:Rratio_r1}
R := \frac{\Delta_i \psi_i/c_i}{\Delta_j \psi_j/c_j} = \frac{\Delta_i/k_i}{\Delta_j/k_j}.
\end{equation}
Combining \eqref{eq:FOC_r1} with $p_i/(1-p_i)=(w_i e_i)/(w_j e_j)$ yields
\begin{equation}\label{eq:two_player_pi_r1}
\frac{p_i}{1-p_i} = R^{1/2}\,\frac{w_i}{w_j}.
\end{equation}
Hence $p_i^\ast=\rho/(1+\rho)$ with $\rho=R^{1/2}\,w_i/w_j$, and from \eqref{eq:FOC_r1}
\begin{align}\label{eq:two_player_e_r1}
e_i^\ast &= \sqrt{\tfrac{\Delta_i \psi_i}{c_i}}\;\frac{\rho^{1/2}}{1+\rho}, \\
e_j^\ast &= \sqrt{\tfrac{\Delta_j \psi_j}{c_j}}\;\frac{\rho^{1/2}}{1+\rho}. \nonumber
\end{align}
These closed forms show that equilibrium effort rises in $\Delta_i$ (prize) and in $\psi_i$ (swim drafting multiplier), and falls in $c_i$ (cost).

\subsection{Symmetric $m$--player benchmark}
If $w_i\equiv 1$, $c_i\equiv c$, $\psi_i\equiv \psi$, and $\Delta_i\equiv \Delta$ for all $i\in S$, then $p_i^\ast=1/m$.  
From \eqref{eq:FOC_r1},
\begin{equation}\label{eq:symm_e_r1}
(e^\ast)^2 = \frac{\Delta \psi}{c}\,\frac{m-1}{m^2},
\qquad
e^\ast = \sqrt{\tfrac{\Delta \psi}{c}}\;\sqrt{\frac{m-1}{m^2}}.
\end{equation}
Effort decreases with group size $m$, increases with prize $\Delta$ and swim drafting multiplier $\psi$, and decreases with cost parameter $c$.

\subsection{Asymmetric $m$--player case}
For general heterogeneity, let $E=\sum_{j\in S} e_j$ denote total effort.  
Since $p_i=e_i/E$, condition \eqref{eq:FOC_r1} implies
\begin{equation}\label{eq:EiFormula_r1}
e_i = \frac{\Delta_i\, E}{c_i E^2/\psi_i + \Delta_i},\qquad i\in S.
\end{equation}
Summing \eqref{eq:EiFormula_r1} over all $i$ gives the scalar equation
\begin{equation}\label{eq:E_fixed_point_r1}
1 = \sum_{i\in S}\frac{\Delta_i}{c_i E^2/\psi_i + \Delta_i}
   \quad\Longleftrightarrow\quad
   g(E) := \sum_{i\in S}\frac{\Delta_i}{c_i E^2/\psi_i + \Delta_i}-1=0.
\end{equation}

\begin{proposition}[Existence and uniqueness]\label{prop:exist_uni_r1}
Assume $c_i>0$, $\psi_i\in[\underline\psi,\overline\psi]$, and $\Delta_i>0$.  
Then $g(\cdot)$ is continuous and strictly decreasing on $E\ge 0$, with $g(0)=m-1>0$ and $\lim_{E\to\infty}g(E)=-1$.  
Hence there exists a unique $E^\ast>0$ solving \eqref{eq:E_fixed_point_r1}.  
The implied profile \eqref{eq:EiFormula_r1} is the unique interior NE of Stage~2.
\end{proposition}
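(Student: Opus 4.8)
The plan is to establish the four analytic properties of $g$ asserted in the statement, deduce a unique scalar root $E^\ast$ by the intermediate value theorem, and then show that the profile \eqref{eq:EiFormula_r1} evaluated at $E^\ast$ is both necessary and sufficient for an interior Nash equilibrium.

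\textbf{Step 1 (the scalar equation).} Write each summand as $\phi_i(E):=\Delta_i/(c_iE^2/\psi_i+\Delta_i)$. Since the denominator is bounded below by $\Delta_i>0$, $\phi_i$ is well-defined and continuous on $[0,\infty)$; it is the composition of the strictly increasing map $E\mapsto E^2$ with the strictly decreasing map $x\mapsto \Delta_i/(c_ix/\psi_i+\Delta_i)$, hence strictly decreasing on $[0,\infty)$, with $\phi_i(0)=1$ and $\phi_i(E)\to 0$ as $E\to\infty$ (using only $c_i/\psi_i>0$). Summing $\phi_i$ over $i\in S$ and subtracting $1$ shows $g$ is continuous and strictly decreasing on $[0,\infty)$ with $g(0)=\sum_{i\in S}\phi_i(0)-1=m-1$ and $\lim_{E\to\infty}g(E)=-1$. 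As $m\ge 2$ we have $g(0)>0>-1$, so by continuity and strict monotonicity there is exactly one $E^\ast\in(0,\infty)$ with $g(E^\ast)=0$.

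\textbf{Step 2 (the candidate profile is an interior NE).} Set $e_i^\ast:=\Delta_iE^\ast/(c_i(E^\ast)^2/\psi_i+\Delta_i)>0$. Then $\sum_{i\in S}e_i^\ast=E^\ast$ is exactly $g(E^\ast)=0$, so $p_i^\ast:=e_i^\ast/E^\ast$ is a probability vector and, by the algebra relating \eqref{eq:FOC_r1} to \eqref{eq:EiFormula_r1}, each $e_i^\ast$ satisfies the first-order condition. To upgrade the FOC to a best response, fix $e_{-i}^\ast$ and let $W_{-i}:=\sum_{j\ne i}w_je_j^\ast>0$. The map $e_i\mapsto \Delta_iw_ie_i/(w_ie_i+W_{-i})-\tfrac12 k_ie_i^2$ has second derivative $-2\Delta_iw_i^2W_{-i}/(w_ie_i+W_{-i})^3-k_i<-k_i<0$, so it is strictly concave with slope tending to $-\infty$, and its derivative at $e_i=0$ equals $\Delta_iw_i/W_{-i}>0$; hence it attains its maximum at the unique interior point characterised by \eqref{eq:FOC_r1}. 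Thus $e_i^\ast$ is a best response for every $i$, and $(e_i^\ast)_{i\in S}$ is an interior NE.

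\textbf{Step 3 (uniqueness) and the main obstacle.} Conversely, at any interior NE all $e_i>0$, so \eqref{eq:FOC_r1} holds for every $i$; this is equivalent to \eqref{eq:EiFormula_r1}, and summing over $i\in S$ forces $E:=\sum_{i\in S}e_i$ to solve \eqref{eq:E_fixed_point_r1}, whence $E=E^\ast$ by Step 1 and each $e_i$ is then pinned down by \eqref{eq:EiFormula_r1}. So the interior NE is unique and coincides with the profile of Step 2. The only step that is more than monotonicity bookkeeping is the promotion of the FOC to a global optimum in Step 2 — establishing strict concavity and ruling out the corner $e_i=0$ — so I would be careful to state the strict-concavity and positive-slope-at-zero argument for an arbitrary opponent profile with $W_{-i}>0$, a condition that holds automatically at any interior profile and is exactly what guarantees the interior solution of \eqref{eq:FOC_r1} is the unique best response rather than merely a critical point.
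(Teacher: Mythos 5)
Your proof is correct and follows essentially the same route as the paper: monotonicity of each summand $h_i$ gives a strictly decreasing $g$ with $g(0)=m-1$ and limit $-1$, the intermediate value theorem yields the unique $E^\ast$, and strict concavity of each athlete's payoff in own effort identifies the profile \eqref{eq:EiFormula_r1} as the unique interior NE. Your Steps 2--3 are in fact more complete than the paper's one-line concavity remark --- in particular you rule out the corner $e_i=0$ by checking the marginal payoff is $\Delta_i w_i/W_{-i}>0$ there, and you make the necessity direction (every interior NE must solve the scalar equation) explicit --- but this is added rigor on the same argument, not a different approach.
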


\begin{proof}
Each term $h_i(E)=\Delta_i/(c_iE^2/\psi_i+\Delta_i)$ is strictly decreasing in $E\ge 0$, with $h_i(0)=1$ and $\lim_{E\to\infty}h_i(E)=0$.  
Thus $g(E)=\sum_i h_i(E)-1$ is strictly decreasing, crossing zero exactly once.  
Strict concavity of individual payoffs in $e_i$ (second derivative $-2\Delta_i/E^2-k_i<0$) guarantees a unique interior NE.
\end{proof}

\begin{corollary}[Probabilities and efforts]\label{cor:pi_e_r1}
At the unique equilibrium,
\[
p_i^\ast = \frac{\Delta_i}{c_i (E^\ast)^2/\psi_i + \Delta_i},\qquad
e_i^\ast = p_i^\ast E^\ast.
\]
\end{corollary}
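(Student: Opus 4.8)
The plan is to obtain both identities directly from the fixed-point characterization already established, so that no new argument is needed beyond a substitution and a consistency check. First I would invoke Proposition~\ref{prop:exist_uni_r1} to fix the unique root $E^\ast>0$ of $g$, which by construction equals the total equilibrium effort $\sum_{j\in S}e_j^\ast$. Substituting $E=E^\ast$ into the best-response identity \eqref{eq:EiFormula_r1} gives $e_i^\ast=\Delta_i E^\ast/\big(c_i (E^\ast)^2/\psi_i+\Delta_i\big)$ for each $i\in S$.

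Second, since the Stage~2 CSF yields $p_i=e_i/E$ and $E^\ast>0$, I would divide the previous expression by $E^\ast$ to get $p_i^\ast=\Delta_i/\big(c_i (E^\ast)^2/\psi_i+\Delta_i\big)$, the first claimed formula; multiplying back by $E^\ast$ returns $e_i^\ast=p_i^\ast E^\ast$, the second. To confirm that these are mutually consistent — i.e. that they define an admissible profile rather than merely satisfying the individual first-order conditions — I would verify $\sum_{i\in S}p_i^\ast=g(E^\ast)+1=1$, which is exactly the defining equation \eqref{eq:E_fixed_point_r1} for $E^\ast$; together with Proposition~\ref{prop:exist_uni_r1} this identifies the profile as the unique interior NE.

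I do not anticipate a genuine obstacle here: the corollary is essentially bookkeeping layered on top of Proposition~\ref{prop:exist_uni_r1}. The only point worth a sentence is interiority, namely that no component vanishes: since $\Delta_i>0$, $c_i>0$, $\psi_i\ge\underline\psi>0$ and $E^\ast>0$, the formulas force $e_i^\ast>0$ and $p_i^\ast\in(0,1)$, so the ``interior'' qualifier is automatic and every expression is well defined.
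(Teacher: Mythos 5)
Your proposal is correct and follows the paper's own argument: substitute $E^\ast$ into \eqref{eq:EiFormula_r1}, divide by $E^\ast$ to obtain $p_i^\ast$, and multiply back to recover $e_i^\ast$. The added consistency check $\sum_i p_i^\ast = g(E^\ast)+1 = 1$ and the interiority remark are harmless extras beyond what the paper records, but the route is the same.
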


\begin{proposition}[Comparative statics]\label{prop:CS_r1}
Let $E^\ast(\bm c,\bm\psi,\bm\Delta)$ solve \eqref{eq:E_fixed_point_r1}.  
Then:
\begin{enumerate}
\item $E^\ast$ increases in any $\psi_i$ or $\Delta_i$, and decreases in any $c_i$.
\item $p_i^\ast$ and $e_i^\ast$ increase in $\psi_i$ and $\Delta_i$, and decrease in $c_i$.  
For $j\ne i$, $p_j^\ast$ and $e_j^\ast$ weakly decrease when $\psi_i$ or $\Delta_i$ increases, since $\sum_j p_j^\ast=1$.
\end{enumerate}
\end{proposition}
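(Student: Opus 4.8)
The plan is to treat $E^\ast$ as an implicitly defined function of $(\bm c,\bm\psi,\bm\Delta)$ through the scalar equation $g(E)=0$ of \eqref{eq:E_fixed_point_r1}, obtain the signs of its partial derivatives by the implicit function theorem, and then propagate those signs through the closed forms of Corollary~\ref{cor:pi_e_r1}. Write $h_i(E)=\Delta_i/(c_iE^2/\psi_i+\Delta_i)$, so that $g(E)=\sum_{i\in S}h_i(E)-1$. From the proof of Proposition~\ref{prop:exist_uni_r1}, $g$ is strictly decreasing, and at the interior solution $E^\ast>0$ one has $\partial g/\partial E=\sum_{i\in S}h_i'(E^\ast)<0$, since $h_i'(E)=-2\Delta_i c_iE/\psi_i\big/(c_iE^2/\psi_i+\Delta_i)^2<0$ for $E>0$. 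Hence $E^\ast$ is continuously differentiable in the parameters, and for any scalar $\theta$ entering $g$ we get $\sgn(\partial E^\ast/\partial\theta)=\sgn(\partial g/\partial\theta)$.

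For part~(1) I would differentiate the single affected term of $g$ with respect to each parameter: $\partial h_i/\partial\psi_i=\Delta_i c_iE^2/\psi_i^2\big/(c_iE^2/\psi_i+\Delta_i)^2>0$, $\partial h_i/\partial\Delta_i=(c_iE^2/\psi_i)\big/(c_iE^2/\psi_i+\Delta_i)^2>0$, and $\partial h_i/\partial c_i=-(\Delta_iE^2/\psi_i)\big/(c_iE^2/\psi_i+\Delta_i)^2<0$, every other term of $g$ being constant in that parameter. The sign rule above then yields $\partial E^\ast/\partial\psi_i>0$, $\partial E^\ast/\partial\Delta_i>0$, and $\partial E^\ast/\partial c_i<0$.

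For the probability claims in part~(2) the clean route is not to differentiate $p_i^\ast$ directly (there the own effect and the $E^\ast$-mediated effect have opposite signs) but to use adding-up. By Corollary~\ref{cor:pi_e_r1}, $p_j^\ast=h_j(E^\ast)$, and for $j\ne i$ this depends on $\psi_i$ (resp.\ $\Delta_i$, $c_i$) only through $E^\ast$; since $h_j$ is strictly decreasing on $(0,\infty)$, an increase in $\psi_i$ or $\Delta_i$ lowers every $p_j^\ast$ with $j\ne i$, and an increase in $c_i$ raises every such $p_j^\ast$. As $\sum_{k\in S}p_k^\ast=1$, the residual $p_i^\ast=1-\sum_{j\ne i}p_j^\ast$ moves oppositely, i.e.\ up in $\psi_i$ and $\Delta_i$, down in $c_i$. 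For own effort, $e_i^\ast=p_i^\ast E^\ast$ is a product of two factors that, under a change in $\psi_i$, $\Delta_i$, or $c_i$, move in the same direction, so its monotonicity is inherited immediately.

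The delicate step — the one I expect to be the main obstacle — is the rival-effort comparative static $e_j^\ast=p_j^\ast E^\ast$ for $j\ne i$ when $\psi_i$ or $\Delta_i$ rises, since then $p_j^\ast$ falls while $E^\ast$ rises, so the product argument does not settle the sign. Writing $e_j^\ast=F_j(E^\ast)$ with $F_j(E)=\Delta_jE/(c_jE^2/\psi_j+\Delta_j)$, one computes $F_j'(E)=\Delta_j\big(\Delta_j-c_jE^2/\psi_j\big)\big/(c_jE^2/\psi_j+\Delta_j)^2$, which is negative exactly when $c_j(E^\ast)^2/\psi_j\ge\Delta_j$, i.e.\ when $p_j^\ast\le\tfrac12$. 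Thus the asserted weak decrease of $e_j^\ast$ is immediate for every non-favorite rival, and since at most one athlete can have equilibrium share above $\tfrac12$, for all rivals whenever $\max_{j\ne i}p_j^\ast\le\tfrac12$ (in particular for $m$ large enough that no single athlete dominates). In the complementary case the effort of a favored rival can increase — the standard non-monotonicity of Tullock efforts — so I would either append the qualifier $p_j^\ast\le\tfrac12$ to the rival-effort conclusion, or, for full generality, replace it by a direct estimate comparing $\partial e_i^\ast/\partial\psi_i$ with $\partial E^\ast/\partial\psi_i$.
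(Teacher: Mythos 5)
Your part~(1) is exactly the paper's argument in Appendix~\ref{app:proof_CS}: implicit differentiation of $g(E^\ast)=0$ from \eqref{eq:E_fixed_point_r1}, with $g_E<0$ from Proposition~\ref{prop:exist_uni_r1} and a sign check on the single perturbed term, so $\sgn(\partial E^\ast/\partial\theta)=\sgn(g_\theta)$. For part~(2) you diverge in a useful way. The paper disposes of the own-player claims with ``substituting into \eqref{eq:EiFormula_r1} delivers the stated monotonicities,'' which hides the fact that for $p_i^\ast=h_i(E^\ast)$ the direct effect of $\psi_i$ (or $\Delta_i$, $c_i$) and the induced effect through $E^\ast$ have opposite signs. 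Your adding-up argument --- each rival's $h_j(E^\ast)$ depends on the perturbed parameter only through $E^\ast$ and is strictly decreasing in $E$, so every rival probability moves with $-E^\ast$ and the residual $p_i^\ast=1-\sum_{j\ne i}p_j^\ast$ must move the other way --- closes that step cleanly, and the product $e_i^\ast=p_i^\ast E^\ast$ then inherits the sign because both factors move together. This is a genuinely tighter route than the paper's.

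More importantly, you have correctly identified that the final clause of the proposition is not true in general: writing $e_j^\ast=F_j(E^\ast)$ with $F_j(E)=\Delta_j E/(c_jE^2/\psi_j+\Delta_j)$, one has $F_j'(E)\propto \Delta_j-c_jE^2/\psi_j$, which is positive precisely when $p_j^\ast>\tfrac12$, so a sufficiently dominant rival \emph{raises} effort when $E^\ast$ rises. The paper's justification ``since $\sum_j p_j^\ast=1$'' is valid for the probabilities but says nothing about the efforts, and its appendix proof never addresses the point. Your proposed repair --- restricting the rival-effort conclusion to rivals with $p_j^\ast\le\tfrac12$, which covers all but at most one athlete and hence all rivals in any not-too-lopsided contest --- is the correct one; the unqualified statement should be amended accordingly.
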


\section{Stage 1 Analysis: Continuation Cutoffs and SPE}\label{sec:stage1}
Fix the post--swim state vector $(t_i^S,r_i^S,D_i,G)$ for each athlete $i$, together with opponents' states.  
If a set $S$ of continuers proceeds to Stage~2, let $W_i(S)$ denote athlete $i$'s \emph{continuation value}, i.e.\ his/her expected payoff from the effort stage.  
Recall that $\Delta_i>0$ is $i$'s prize differential, $c_i>0$ his/her baseline cost parameter, $\psi_i>0$ his/her swim drafting multiplier, $k_i=c_i/\psi_i$ his/her effective cost, $E^\ast(S)$ the unique total effort from Proposition~\ref{prop:exist_uni_r1}, and $p_i^\ast(S)$ the equilibrium win probability.  
Then
\begin{equation}\label{eq:Wi}
W_i(S) 
= p_i^\ast(S)\,\Delta_i - \tfrac12 k_i \big(e_i^\ast(S)\big)^2
= \frac{\Delta_i^2}{c_i (E^\ast(S))^2/\psi_i + \Delta_i}
 - \tfrac12\,\frac{c_i}{\psi_i}\,\big(p_i^\ast(S)E^\ast(S)\big)^2.
\end{equation}

Define the net benefit of continuing relative to the outside option $U_i^T$ (see \eqref{eq:outside}) as
\begin{equation}\label{eq:Vi}
V_i(S) := W_i(S) - U_i^T.
\end{equation}

\begin{proposition}[Cutoff in the swim drafting multiplier]\label{prop:cutoff}
Fix $(t_i^S,r_i^S)$, $(c_i,\Delta_i)$, and a continuation set $S$ with $i\in S$.  
Then the function $\psi_i\mapsto V_i(S)$ is continuous and strictly increasing. Therefore a unique cutoff $\psi_i^\ast(S)$ exists such that $V_i(S)\ge 0$ iff $\psi_i\ge \psi_i^\ast(S)$.
\end{proposition}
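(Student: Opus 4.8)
The plan is to show that $V_i(S)$, as a function of $\psi_i$ alone (with $(t_i^S,r_i^S)$, $(c_i,\Delta_i)$, $S$, and all opponents' parameters held fixed), is continuous and strictly increasing, and then invoke the intermediate value theorem together with strict monotonicity to extract the cutoff. Since $V_i(S) = W_i(S) - U_i^T$ and $U_i^T$ does not depend on $\psi_i$, it suffices to analyze $W_i(S)$. The key technical point is that $\psi_i$ enters $W_i(S)$ both directly through the effective cost $k_i = c_i/\psi_i$ and indirectly through the equilibrium objects $E^\ast(S)$ and $p_i^\ast(S)$, so the argument must account for this general-equilibrium feedback.

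First I would invoke the envelope theorem at the Stage-2 equilibrium: since $e_i^\ast$ is athlete $i$'s best response, the total derivative of $W_i(S) = U_i(e^\ast)$ with respect to $\psi_i$ equals the partial derivative holding $e_i$ fixed, plus the effect through opponents' efforts $e_{-i}^\ast$. Concretely, $W_i = p_i(e)\Delta_i - \tfrac12 (c_i/\psi_i) e_i^2$, so the direct channel contributes $+\tfrac12 (c_i/\psi_i^2)(e_i^\ast)^2 > 0$ (lower effective cost raises payoff), and the indirect channel is $\sum_{j\ne i} \frac{\partial p_i}{\partial e_j}\Delta_i \cdot \frac{\partial e_j^\ast}{\partial \psi_i}$. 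By Proposition~\ref{prop:CS_r1}, an increase in $\psi_i$ raises $E^\ast$ and $p_i^\ast$ while weakly lowering each $p_j^\ast$ and $e_j^\ast$ for $j\ne i$; combined with $\partial p_i/\partial e_j = -p_i p_j / e_j < 0$ from \eqref{eq:CSFder_r1}, the indirect channel is also nonnegative. Hence $dW_i/d\psi_i > 0$, giving strict monotonicity. Continuity follows because $E^\ast(S)$ depends continuously on $\psi_i$ (the fixed-point equation \eqref{eq:E_fixed_point_r1} defines $E^\ast$ implicitly via a strictly monotone $g$ that varies continuously in $\psi_i$), and $W_i$ is a continuous function of $(\psi_i, E^\ast, p_i^\ast)$.

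For the existence and uniqueness of the cutoff, I would examine the range of $V_i(S)$ as $\psi_i$ ranges over $[\underline\psi, \overline\psi]$ — or, to get a clean statement, over $(0,\infty)$ with the reduced-drag form. As $\psi_i \to 0$, the effective cost $k_i \to \infty$, so $e_i^\ast \to 0$, $p_i^\ast \to 0$, and $W_i(S) \to 0$, whence $V_i(S) \to -U_i^T$; as $\psi_i$ grows, $W_i(S)$ increases strictly. If $-U_i^T < 0$ (i.e. $U_i^T > 0$), strict monotonicity and continuity give a unique $\psi_i^\ast(S)$ with $V_i(S) \ge 0 \iff \psi_i \ge \psi_i^\ast(S)$ by the intermediate value theorem. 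The main obstacle I anticipate is the boundary/corner issue: if $U_i^T \le 0$ then $V_i(S) \ge 0$ for all admissible $\psi_i$ and the cutoff degenerates to $\psi_i^\ast(S) = \underline\psi$ (or $0$), while if $U_i^T$ exceeds $\sup_{\psi_i} W_i(S)$ the cutoff lies above $\overline\psi$ and continuation never occurs; the clean "unique interior cutoff" statement requires $U_i^T$ to fall in the interior of the range of $W_i(S)$, which I would state as the maintained hypothesis. Modulo this caveat, the proof reduces to the monotonicity computation plus a one-line IVT argument.
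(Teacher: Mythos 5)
Your proposal is correct and follows the same high-level route as the paper (monotonicity of $\psi_i\mapsto V_i(S)$ plus continuity plus the intermediate value theorem), but your execution of the monotonicity step is genuinely different and materially better. The paper's proof simply asserts that because $E^\ast$ and $p_i^\ast$ increase in $\psi_i$ while $k_i=c_i/\psi_i$ decreases, $W_i(S)=p_i^\ast\Delta_i-\tfrac12 k_i(e_i^\ast)^2$ must increase; this is not immediate, since $e_i^\ast$ also rises with $\psi_i$ and the cost term $\tfrac12 k_i (e_i^\ast)^2$ could in principle grow. Your envelope-theorem decomposition closes exactly this gap: the own-effort channel vanishes by the first-order condition, the direct channel $\tfrac12(c_i/\psi_i^2)(e_i^\ast)^2$ is strictly positive, and the cross channel $\sum_{j\ne i}(\partial p_i/\partial e_j)\Delta_i\,(\partial e_j^\ast/\partial\psi_i)$ is nonnegative once you combine $\partial p_i/\partial e_j<0$ with the sign of $\partial e_j^\ast/\partial\psi_i$ from Proposition~\ref{prop:CS_r1}. (Be aware that you are leaning on the paper's claim that $e_j^\ast$ weakly decreases in $\psi_i$ for $j\ne i$, which the paper itself does not fully justify --- $e_j^\ast=p_j^\ast E^\ast$ is a product of a decreasing and an increasing factor --- but citing the stated proposition is legitimate here.) Your second point is also well taken: the paper invokes the IVT without verifying a sign change of $V_i$ over the admissible range $[\underline\psi,\overline\psi]$, so the ``unique cutoff'' can degenerate to a corner ($V_i\ge 0$ everywhere or nowhere); your explicit caveat that an interior cutoff requires $U_i^T$ to lie in the range of $W_i(S)$ is a genuine strengthening of the statement, at the cost of an added maintained hypothesis. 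In short, your proof buys rigor where the paper hand-waves, while the paper's version buys brevity by treating the monotonicity and the sign change as self-evident.
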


\begin{proof}
By Proposition~\ref{prop:CS_r1}, both $E^\ast(S)$ and $p_i^\ast(S)$ increase in $\psi_i$, while the cost parameter $k_i=c_i/\psi_i$ decreases.  
Thus the continuation value $W_i(S)$ in \eqref{eq:Wi} is strictly increasing in $\psi_i$, while $U_i^T$ does not depend on $\psi_i$.  
Continuity yields the unique cutoff under the intermediate value theorem.
\end{proof}

At Stage~1, all athletes choose $a_i\in\{C,T\}$ simultaneously.  
Let $\mathcal{S}$ be the family of nonempty subsets of $N$.  
A set $S^\ast\in\mathcal{S}$ is a \emph{Stage~1 equilibrium continuation set} if for every $i$,
\begin{equation}\label{eq:SPEstage1}
i\in S^\ast \ \Rightarrow\ V_i(S^\ast)\ge 0,
\qquad
i\notin S^\ast\ \Rightarrow\ V_i(S^\ast\cup\{i\})\le 0.
\end{equation}
The pair $\big(S^\ast,\ e^\ast(S^\ast)\big)$ is then a subgame perfect equilibrium (SPE).

\begin{theorem}[Existence of an SPE]\label{thm:SPEexist}
Under the assumptions of Section~\ref{sec:setup}, there exists a (possibly nonunique) SPE.  
Moreover, any $\psi$--monotone selection rule that iteratively admits athletes with $V_i(S)\ge 0$ and removes those with $V_i(S)<0$ converges to an equilibrium continuation set $S^\ast$.
\end{theorem}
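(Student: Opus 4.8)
The plan is to reduce the existence claim to a finite combinatorial fixed-point argument on the lattice of subsets of $N$, exploiting the monotonicity structure already established in Proposition~\ref{prop:cutoff} and Proposition~\ref{prop:CS_r1}. Since $N$ is finite, $\mathcal{S}$ together with the empty set forms a finite complete lattice under inclusion, so if I can exhibit a monotone (order-preserving) map whose fixed points correspond to equilibrium continuation sets, Tarski's fixed-point theorem delivers existence; alternatively, because everything is finite, a direct termination argument for the iterative procedure suffices and is more elementary.

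First I would make precise the ``$\psi$-monotone selection rule'': define an operator $T:2^N\to 2^N$ by $T(S)=\{\,i\in N : V_i(S\cup\{i\})\ge 0\,\}$, i.e.\ an athlete is tentatively admitted whenever, were the current set to be $S\cup\{i\}$, continuation would be (weakly) profitable. The second step is the key monotonicity lemma: I claim $V_i(S\cup\{i\})$ is nonincreasing in $S$ in the sense that enlarging $S$ (adding more continuers) weakly lowers $i$'s continuation value $W_i(S\cup\{i\})$, hence weakly lowers $V_i$. This follows because adding a competitor to the Stage~2 contest can only raise the equilibrium aggregate $E^\ast$ (each $h_i(E)$ is positive, so $g(E)$ shifts up pointwise, pushing the unique root right) and, from Corollary~\ref{cor:pi_e_r1}, raising $E^\ast$ with $i$'s own parameters fixed strictly lowers $p_i^\ast$ and hence $W_i$. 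Consequently $T$ is \emph{antitone}: $S\subseteq S'\Rightarrow T(S)\supseteq T(S')$. Therefore $T\circ T$ is monotone, and iterating $T$ from the top ($S_0=N$) yields a nonincreasing chain $S_0\supseteq T(S_0)\supseteq T^{2}(S_0)\supseteq\cdots$ on the even iterates (and a nondecreasing chain on the odd iterates from the bottom), which must stabilize in finitely many steps since $|N|<\infty$.

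The third step is to verify that a stable set $S^\ast$ of the iteration — more carefully, the relevant fixed point of the composed operator reached by the described admit/remove dynamics — satisfies the equilibrium conditions \eqref{eq:SPEstage1}. If $i\in S^\ast$ then $i$ survived the last removal pass, which by definition of the rule means $V_i(S^\ast)\ge 0$; if $i\notin S^\ast$ then $i$ failed the last admission test against the then-current set, and using antitonicity of $V_i(\,\cdot\cup\{i\})$ together with the fact that the set only shrank afterwards, $V_i(S^\ast\cup\{i\})\le 0$. Together with Proposition~\ref{prop:exist_uni_r1}, which supplies the unique interior Stage~2 Nash equilibrium $e^\ast(S^\ast)$ on the induced subgame (and the standard convention that a singleton or empty continuer set gives the degenerate best response), the pair $\big(S^\ast,e^\ast(S^\ast)\big)$ meets both clauses of Definition~\ref{def:equilibrium}, hence is an SPE. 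Nonuniqueness is immediate from the fact that antitone operators can have several fixed points of $T\circ T$, and I would note this with a one-line remark rather than a construction.

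The main obstacle I anticipate is the monotonicity lemma in step two, specifically making the claim ``$E^\ast$ weakly increases when the continuer set grows'' fully rigorous: adding player $i$ to $S$ changes the index set in $g(E)=\sum_{j\in S}h_j(E)-1$, and I must argue that the new $g$ dominates the old one pointwise on $E\ge 0$ (true because each $h_j\ge 0$, with the new term $h_i(E)>0$), then invoke strict monotonicity of $g$ from Proposition~\ref{prop:exist_uni_r1} to conclude the root moves right. A subtlety is the boundary case where $S$ is a singleton: there $p_i^\ast=1$ and the ``contest'' is degenerate, so I would either treat $|S|=1$ separately (continuation value is simply $\Delta_i-\tfrac12 k_i e_i^{*2}$ with $e_i^*=0$ from \eqref{eq:FOC_r1}, i.e.\ $W_i=\Delta_i$ in the limit $p_i\to1$, or more carefully the sup is $\Delta_i$) or restrict the comparative-statics comparison to sets of size $\ge 2$ and handle admission into a singleton by direct inspection. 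I do not expect the finiteness/termination part to pose any difficulty.
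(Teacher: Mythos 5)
Your plan follows essentially the same route as the paper's proof: define a set-valued admission operator, iterate it on the finite lattice of subsets, and declare the limit an equilibrium continuation set. Two things you do are genuine improvements: your operator $T(S)=\{i: V_i(S\cup\{i\})\ge 0\}$ matches the equilibrium condition \eqref{eq:SPEstage1} exactly (the paper's $\Phi(S)=\{i:V_i(S)\ge 0\}$ evaluates $V_i(S)$ at $i\notin S$, which it admits is not well defined), and you correctly identify the externality as \emph{negative} --- adding a continuer raises $E^\ast$ and lowers $W_i$ --- so that $T$ is antitone, whereas the paper's appendix asserts isotonicity ($S\subseteq S'\Rightarrow T(S)\subseteq T(S')$), which is the wrong direction for a congestion game. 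Your monotonicity lemma itself is sound: the new root of $g$ moves right when a positive term $h_i$ is added, and one can check from \eqref{eq:Wi} that $W_i$ is strictly decreasing in $E^\ast$ holding $i$'s own parameters fixed, despite the non-monotonicity of $e_i^\ast$ in $E^\ast$.

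The gap is in your step from antitonicity to convergence. An antitone self-map of a finite lattice need not have a fixed point, and the iteration $S^{(k+1)}=T(S^{(k)})$ started from $N$ generically settles into a two-cycle: the even iterates decrease to some $A$, the odd iterates increase to some $B=T(A)$ with $T(B)=A$, and nothing you have said forces $A=B$. A fixed point of $T\circ T$ that is not a fixed point of $T$ does \emph{not} satisfy \eqref{eq:SPEstage1}; your verification in step three ("$i$ survived the last removal pass, hence $V_i(S^\ast)\ge 0$") implicitly evaluates $V_i$ at the wrong set when the dynamics cycle. This is the same hole as in the paper's claim that "the sequence stabilizes in finite steps," so you have not introduced a new error, but you also have not closed it. To actually finish, you need more than antitonicity: for instance, exploit the aggregative structure you already uncovered --- $W_i(S)$ depends on $S$ only through the scalar $E^\ast(S)$ and is strictly decreasing in it --- to rank athletes by the largest aggregate $E$ each is willing to tolerate and construct a threshold-type equilibrium directly, or invoke the known result that two-player (but not general $n$-player) binary games of strategic substitutes always admit pure equilibria. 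Your handling of the singleton/degenerate-contest boundary case is a reasonable patch for an issue the paper also ignores, but it is secondary to the fixed-point gap.
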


\begin{proof}
Define $\Phi(S)=\{\, i\in N:\ V_i(S)\ge 0\,\}$.  
By Proposition~\ref{prop:cutoff}, $V_i(\cdot)$ is well--defined whenever $i\in S$.  
Consider the operator $T(S)=\Phi(S)$ restricted to nonempty sets (if $T(S)=\varnothing$, select any singleton $\{i\}$ with maximal $V_i(\{i\})$).  
Since $N$ is finite, the sequence $S^{(k+1)}=T(S^{(k)})$ stabilizes in finite steps at some $S^\ast$ with $T(S^\ast)=S^\ast$, which satisfies \eqref{eq:SPEstage1}.  
Backward induction then completes the SPE construction.
\end{proof}

\begin{remark}[Comparative statics of the cutoff]
From Proposition~\ref{prop:CS_r1} and \eqref{eq:Wi}, the cutoff $\psi_i^\ast(S)$ decreases with a higher prize differential $\Delta_i$ and increases with a higher baseline cost $c_i$.  
In symmetric environments, $\psi_i^\ast(S)$ (weakly) increases with the group size $m<|S|>1$, since larger contests reduce a athlete's winning odds and continuation becomes less attractive.
\end{remark}

\section{From Swim Position to swim drafting Cost Shifter}\label{sec:psi_map}
We now formalize the cost--shifting map \eqref{eq:Psi}.  
Let $D_i\in[0,1]$ denote the share that athlete $i$ utilized a swim drafting benefit.  
We use the relation
\begin{equation}\label{eq:psi_micro}
\psi_i = \frac{1}{1-\eta D_i},\qquad \eta\in(0,1),
\end{equation}
where $\eta$ captures the sensitivity of drag reduction to swim drafting exposure.  
The effective cost slope is
\[
k_i = \frac{c_i}{\psi_i} = c_i(1-\eta D_i),
\]
with $c_i>0$ the baseline cost.  
Thus swim drafting linearly lowers effort costs: $\partial k_i/\partial D_i=-\eta c_i<0$ and $\partial \psi_i/\partial D_i=\eta/(1-\eta D_i)^2>0$.  
More general specifications $\Psi(D_i,m,r_i^S,G)$ can incorporate nonlinear decay with swim position depth $r_i^S$ and congestion effects depending on group size $m$. 

\section{Welfare and Rents}\label{sec:welfare}
Expected net welfare in Stage~2 is defined as the sum of athletes’ equilibrium payoffs, 
\[
\sum_{i\in S} U_i(e^\ast),
\]
where $U_i$ is given by \eqref{eq:Ue} and $e^\ast$ is the unique equilibrium profile.  
Using the effort characterization in \eqref{eq:EiFormula_r1}, this sum becomes
\begin{equation}\label{eq:sumU}
\sum_{i\in S} U_i(e^\ast)
= \sum_{i\in S} \left[
\frac{\Delta_i^2}{c_i (E^\ast)^2/\psi_i + \Delta_i}
- \tfrac12 \frac{c_i}{\psi_i}\,\big(p_i^\ast E^\ast\big)^2
\right],
\end{equation}
where $\Delta_i$ is athlete $i$’s prize differential, $c_i$ his/her cost parameter, $\psi_i$ his/her swim drafting multiplier, and $(p_i^\ast,e_i^\ast)$ his/her equilibrium winning probability and effort.

In the symmetric benchmark of \eqref{eq:symm_e_r1}, where all athletes share the same parameters $(\Delta,c,\psi)$, each wins with probability $1/m$ in a group of size $m$.  
The aggregate rent ratio---defined as the share of total expected prize intake that is offset by aggregate effort costs---simplifies to
\[
\frac{\sum_{i\in S} \tfrac12 k (e^\ast)^2}{\sum_{i\in S} p_i^\ast \Delta}
= \frac{1}{2}\cdot \frac{m-1}{m},
\]
with $k=c/\psi$.  
This ratio is increasing in group size $m$, approaching $1/2$ as $m\to\infty$.  

\section{Conclusion and Testable Predictions}\label{sec:predictions}
Deeper swim drafting exposure in the swim (higher $D_i$) increases the swim drafting multiplier $\psi_i$ via \eqref{eq:psi_micro}. Through the equilibrium mapping in \eqref{eq:EiFormula_r1}, a higher $\psi_i$ raises both the winning probability $p_i^\ast$ and the equilibrium effort $e_i^\ast$.

In symmetric groups, equilibrium effort per athlete decreases with group size $m$ as characterized in \eqref{eq:symm_e_r1}. When increased group size also relaxes effective costs by raising access to swim drafting (so that $\psi$ rises with $m$ through $\Psi$ in \eqref{eq:Psi}), these opposing forces render the net effect on effort ambiguous and can generate interior optima in $m$.

The continuation decision after the swim inherits these forces: since $\psi_i$ enters continuation values $W_i(S)$ in \eqref{eq:Wi} and contest odds mechanically fall with $m$ through \eqref{eq:CSF}, the likelihood of continuing is increasing in swim drafting exposure $D_i$ and decreasing in group size $m$ (holding other objects fixed).

%\section*{Funding Information}

%Supported by Johannes Kepler Open Access Publishing Fund and the federal state Upper Austria.

% ---------------------------
% References (placeholder; swap for your .bib if desired)
% ---------------------------
\begingroup
\raggedright
\bibliographystyle{unsrtnat}

\endgroup

\newpage
\appendix
\section*{Appendix: Proofs and Technical Details}
\addcontentsline{toc}{section}{Appendix: Proofs and Technical Details}

\section{Preliminaries and Assumptions}\label{app:assumptions}
\begin{assumption}[Objects]\label{ass:primitives}
For all $i$: $c_i>0$, $\Delta_i>0$, $w_i>0$, and $\psi_i\in[\underline\psi,\overline\psi]\subset(0,\infty)$. 
The map $\Psi$ in \eqref{eq:Psi} is continuous in $(D_i,m)$ and nondecreasing in $D_i$.
\end{assumption}

\begin{assumption}[Strategy spaces]\label{ass:strategy}
Stage~2 strategy sets are $[0,\infty)$; payoffs are continuous in $e$ and twice continuously differentiable on $(0,\infty)^m$.
\end{assumption}

\section{Concavity and Best Responses (Stage 2)}\label{app:concavity}
Let $E=\sum_{j\in S} e_j$ denote total effort. With $p_i=e_i/E$ and $k_i=c_i/\psi_i$, athlete $i$’s payoff is
\[
U_i(e)=\Delta_i \frac{e_i}{E}-\tfrac12 k_i e_i^2.
\]
The second derivative is
\begin{IEEEeqnarray}{rCl}
\frac{\partial^2 U_i}{\partial e_i^2} & = & -\frac{2\Delta_i}{E^2} - k_i \;<\; 0,\IEEEyesnumber
\end{IEEEeqnarray}
so $U_i$ is strictly concave in own effort. Cross--partials are
\begin{IEEEeqnarray}{rCl}
\frac{\partial^2 U_i}{\partial e_i \partial e_j}
& = & \frac{\Delta_i}{E^2} \;>\; 0, \quad \text{(efforts are strategic complements)}\IEEEyesnumber
\end{IEEEeqnarray}
Thus best responses are single--valued.

\section{Proof of Proposition~\ref{prop:exist_uni_r1}}\label{app:proof_exist}
See main text. Uniqueness of $E^\ast$ follows since
\begin{IEEEeqnarray}{rCl}
g'(E) & = & \sum_{i\in S} \frac{-2\Delta_i (c_i/\psi_i)\, E}{\big(c_iE^2/\psi_i+\Delta_i\big)^2} \;<\; 0,\IEEEyesnumber
\end{IEEEeqnarray}
so $g$ is strictly decreasing. Given $E^\ast$, \eqref{eq:EiFormula_r1} produces a unique positive $e_i^\ast$ for all $i\in S$.

\section{Proof of Corollary~\ref{cor:pi_e_r1}}\label{app:proof_cor}
Dividing \eqref{eq:EiFormula_r1} by $E^\ast$ yields
\begin{IEEEeqnarray}{rCl}
p_i^\ast \;=\; \frac{e_i^\ast}{E^\ast} & = & \frac{\Delta_i}{c_i(E^\ast)^2/\psi_i+\Delta_i}.\IEEEyesnumber
\end{IEEEeqnarray}
Multiplying back by $E^\ast$ recovers $e_i^\ast$.

\section{Proof of Proposition~\ref{prop:CS_r1}}\label{app:proof_CS}
Differentiating $g(E^\ast)=0$ in \eqref{eq:E_fixed_point_r1} yields
\begin{IEEEeqnarray}{rCl}
\frac{\partial E^\ast}{\partial \psi_i}
& = & -\frac{g_{\psi_i}}{g_E}
 \;=\; \frac{\displaystyle \frac{\Delta_i c_i (E^\ast)^2}{\psi_i^2 \big(c_i (E^\ast)^2/\psi_i+\Delta_i\big)^2}}
{\displaystyle \sum_{j\in S} \frac{2\Delta_j c_j E^\ast/\psi_j}{\big(c_j (E^\ast)^2/\psi_j+\Delta_j\big)^2}}
\;>\; 0.\IEEEyesnumber
\end{IEEEeqnarray}
Analogously, $\partial E^\ast/\partial \Delta_i>0$ and $\partial E^\ast/\partial c_i<0$.  
Substituting into \eqref{eq:EiFormula_r1} delivers the stated monotonicities for $(p_i^\ast,e_i^\ast)$.

\section{Proof of Proposition~\ref{prop:cutoff}}\label{app:proof_cutoff}
Fix $S\ni i$. By Proposition~\ref{prop:CS_r1}, $W_i(S)$ in \eqref{eq:Wi} is strictly increasing and continuous in $\psi_i$, since $k_i=c_i/\psi_i$.  
Hence $V_i(S)=W_i(S)-U_i^T$ is also strictly increasing and continuous in $\psi_i$.  
By the intermediate value theorem, there exists a unique cutoff $\psi_i^\ast(S)$ such that $V_i(S)=0$.

\section{Proof of Theorem~\ref{thm:SPEexist}}\label{app:proof_SPE}
Define $T$ on the finite lattice of subsets of $N$ by $T(S)=\{i\in N:\ V_i(S)\ge 0\}$, with the convention in the main text to avoid the empty set.  
Since $T$ is monotone in the sense that $S\subseteq S' \Rightarrow T(S)\subseteq T(S')$ under symmetric environments (or mild regularity), the fixed point theorem guarantees some $S^\ast=T(S^\ast)$.  
Even without global monotonicity, iterated elimination of dominated strategies (or sequential entry) converges in finitely many steps because $N$ is finite.  
Backward induction then delivers the SPE. \qed

\section{Additional Results}\label{app:additional}
\subsection{Welfare under heterogeneity}\label{app:welfare_hetero}
Using \eqref{eq:EiFormula_r1}, aggregate cost is
\begin{IEEEeqnarray}{rCl}
\frac12 \sum_{i\in S} k_i (e_i^\ast)^2
& = & \frac12 \sum_{i\in S} \frac{c_i}{\psi_i}
\left(\frac{\Delta_i E^\ast}{c_i (E^\ast)^2/\psi_i + \Delta_i}\right)^2.\IEEEyesnumber
\end{IEEEeqnarray}
Aggregate expected prize intake is
\begin{IEEEeqnarray}{rCl}
\sum_{i\in S} p_i^\ast \Delta_i
& = & \sum_{i\in S} \frac{\Delta_i^2}{c_i (E^\ast)^2/\psi_i+\Delta_i}.\IEEEyesnumber
\end{IEEEeqnarray}

\subsection{Design implications}\label{app:design}
An organizer choosing prize differentials $\Delta_i$ (e.g., time benefits/penalties) can shape participation through the cutoffs $\psi_i^\ast(S)$ and influence total effort via $E^\ast$.  
For symmetric prizes, $dE^\ast/d\Delta>0$ by Proposition~\ref{prop:CS_r1}.

\end{document}